\newtheorem{theorem}{Theorem}
\newtheorem{observation}{Observation}
\newtheorem{proposition}{Proposition}
\newcommand{\e}{\text{e}}
\newcommand{\vc}{\text{vc}}
\newcommand{\cu}{\text{c}}
\newcommand{\Tau}{\mathrm{T}}
\begin{document}

\title{A note on Edge Isoperimetric Numbers and Regular Graphs}


\author{\'Edouard Bonnet\footnote{Institute for Computer Science and Control, Hungarian Academy of Sciences (MTA SZTAKI). bonnet.edouard@sztaki.mta.hu} \qquad Florian Sikora \footnote{PSL, Universit\'{e} Paris-Dauphine, LAMSADE UMR CNRS 7243, France. florian.sikora@dauphine.fr}}

\date{}



%
%

\maketitle

\begin{abstract}
This note resolves an open problem asked by Bezrukov in the open problem session of IWOCA 2014.
It shows an equivalence between regular graphs and graphs for which a sequence of invariants presents some symmetric property.
We extend this result to a few other sequences.
\end{abstract}





\section{Introduction}


Let $G=(V,E)$ be a simple undirected connected graph, with $|V|=n$ and $|E|=m$. 
For any pair of integers $a \leqslant b$, we denote by $[a,b]$ the set $\{a,a+1,\ldots,b-1,b\}$ and we set $[a]:=[1,a]$.
For any set $S$ and integer $i$, ${S \choose i}$ denotes the set of all the subsets of $S$ of size $i$.
All the definitions in this section are relative to $G$ but we choose to indicate the graph in subscript notation only if it would be ambiguous otherwise.
For any subset $S \subseteq V$, we define the following values:

\begin{itemize}
	\item $\e(S) = |\{uv \in E$ $|$ $u \in S$ and $v \in S\}|$,
	\item $\vc(S) = |\{uv \in E$ $|$ $u \in S$ or $v \in S\}|$,
	\item $\cu(S) = |\{uv \in E$ $|$ $u \in S$ and $v \notin S\}|$.
\end{itemize}

We then define, for each $i \in [0,n]$, the quantities:
\begin{itemize}
	\item $\Delta(i) = \max \{\e(S) $ $|$ $S \in {V \choose i}\}$,
	\item $\Sigma(i) = \min \{\e(S) $ $|$ $S \in {V \choose i}\}$,
	\item $\Tau(i) = \max \{\vc(S) $ $|$ $S \in {V \choose i}\}$,
	\item $\Upsilon(i) = \min \{\vc(S) $ $|$ $S \in {V \choose i}\}$,
	\item $\Theta(i) = \max \{\cu(S) $ $|$ $S \in {V \choose i}\}$,
	\item $\Phi(i) = \min \{\cu(S) $ $|$ $S \in {V \choose i}\}$.
\end{itemize}

By definition, observe that $\Delta(i)$ corresponds to the \textsc{$i$-Densest Subgraph} problem (the problem of finding $i$ vertices of the graph s.t. the number of edges induced by these vertices is maximum), $\Sigma(i)$ to \textsc{$i$-Sparsest Subgraph} (the problem of finding $i$ vertices of the graph s.t. the number of edges induced by these vertices is minimum), $\Tau(i)$ to \textsc{Max $i$-Vertex Cover} (also known as \textsc{Partial Vertex Cover} or \textsc{Max $i$-Coverage}, the problem of finding a set of $i$ vertices of the graph s.t. the number of edges with at least one endpoint in this set is maximum), $\Upsilon(i)$ to \textsc{Min $i$-Vertex Cover} (the problem of finding a set of $i$ vertices of the graph s.t. the number of edges with at least one endpoint in this set is minimum), $\Theta(i)$ to \textsc{Max $(i,n-i)$-Cut} (the problem of finding a set of $i$ vertices of the graph s.t. the number of edges with one endpoint in this set and the other endpoint in the complement is maximum) and $\Phi(i)$ to \textsc{Min $(i,n-i)$-Cut} (the problem of finding a set of $i$ vertices of the graph s.t. the number of edges with one endpoint in this set and the other endpoint in the complement is minimum).

In combinatorial optimization, these problems are known as \emph{cardinality-constrained}, or {size-constrained}, or even \emph{fixed-cardinality} graph problems.
They all inherit NP-hardness from their non cardinality-constrained variant, and have been extensively studied mainly from the standpoint of parameterized complexity (see for instance \cite{Cai08}) and approximation (see for instance \cite{Bruglieri2006}). 

In graph theory, those problems are mostly known as \emph{edge isoperimetric graph problems}~\cite{Bezrukov1999,Bezrukov2003} analogously to the \emph{isoperimetric problem} of the ancient Greeks, where one has to find a closed curve with a fixed given perimeter in order to maximize the area of the interior.

Value $\Phi(i)$ is sometimes called the \emph{edge isoperimetric parameter}.
\emph{Edge isoperimetric inequality} consists of upper bounding $\frac{\Phi(i)}{i}$.
For instance, it is known that $\frac{\Phi_{Q_d}(i)}{i} \geqslant d - \log i$ where $Q_d$ is the hypercube graph of dimension~$d$~\cite{Hoory06,Bollobas1986}.
Such inequalities lower bound the edge expansion of graphs and therefore indicate how well they perform as expanders.


Finally, we define, for each $i \in [n]$, the consecutive differences:
\begin{itemize}
	\item $\delta(i) = \Delta(i) - \Delta(i-1)$,
	\item $\sigma(i) = \Sigma(i) - \Sigma(i-1)$,
	\item $\tau(i) = \Tau(i) - \Tau(i-1)$,
	\item $\upsilon(i) = \Upsilon(i) - \Upsilon(i-1)$,
	\item $\theta(i) = \Theta(i) - \Theta(i-1)$,
	\item $\phi(i) = \Phi(i) - \Phi(i-1)$,
\end{itemize}

For any $s \in \{\delta,\sigma,\tau,\upsilon,\theta,\phi\}$, we say that the sequence $s(1), \ldots, s(n)$ (called \emph{$s$-sequence}) is \emph{symmetric} if the value $s(i) + s(n-i+1)$ does not depend on $i$.
In other words, the $s$-sequence is said to be symmetric if, for each $i \in [n]$, $s(i)+s(n-i+1)=s(1)+s(n)$.
It was conjectured by Bezrukov (open problem session of IWOCA 2014~\cite{openiwoca}) that the $\delta$-sequence of a graph $G$ is symmetric iff $G$ is regular. 
We prove this conjecture and show that it extends to other sequences:

\begin{theorem}\label{thm:main}
For any $s \in \{\delta,\sigma,\tau,\upsilon\}$ and graph $G$: 

$G$ is regular if and only if its $s$-sequence is symmetric.
\end{theorem}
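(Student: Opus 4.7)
My plan is to reduce the four cases to just $\delta$ and $\sigma$ via a duality, then dispatch the easy direction with a one-line identity and tackle the converse with a telescoping-sum argument.

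The first step is the trivial identity $\vc(S) + \e(V \setminus S) = m$, valid for every $S \subseteq V$, because an edge contributes to $\vc(S)$ exactly when it has at least one endpoint in $S$ and to $\e(V \setminus S)$ exactly when it has both endpoints outside $S$. Taking the min/max over $|S|=i$ gives $\Upsilon(i) = m - \Delta(n-i)$ and $\Tau(i) = m - \Sigma(n-i)$; differencing yields $\upsilon(i) = \delta(n-i+1)$ and $\tau(i) = \sigma(n-i+1)$. So the $\upsilon$-sequence is just the reversal of the $\delta$-sequence (and similarly for $\tau$ versus $\sigma$), and in particular their symmetries are equivalent. It therefore suffices to prove the theorem for $s \in \{\delta, \sigma\}$.

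For the easy direction, assume $G$ is $r$-regular. Summing degrees in $S$ and subtracting doubly-counted internal edges gives $\vc(S) = r|S| - \e(S)$, hence $\Upsilon(i) = ri - \Delta(i)$ and $\Tau(i) = ri - \Sigma(i)$. Taking consecutive differences yields $\upsilon(i) = r - \delta(i)$ and $\tau(i) = r - \sigma(i)$. Plugging in the reversal identities from the previous paragraph gives $\delta(i) + \delta(n-i+1) = r$ and $\sigma(i) + \sigma(n-i+1) = r$, i.e., both sequences are symmetric.

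For the converse, which I expect to be the main obstacle, suppose the $\delta$-sequence is symmetric with constant sum $c$. Evaluating at $i=1$: $\delta(1) = 0$ trivially, and $\delta(n) = m - \Delta(n-1) = d_{\min}$ (the minimum degree of $G$), since the best $(n-1)$-subset discards the vertex of minimum degree. So $c = d_{\min}$. Then I would telescope: $\sum_{i=1}^n \delta(i) = \Delta(n) - \Delta(0) = m$, so summing the identity $\delta(i) + \delta(n-i+1) = c$ over $i \in [n]$ gives $nc = 2m$, whence $c = 2m/n$ is the average degree. Therefore $d_{\min}$ equals the average degree of $G$, which is possible only if $G$ is regular. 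The argument for $\sigma$ is analogous: $\sigma(1) = 0$ and $\sigma(n) = m - \Sigma(n-1) = d_{\max}$, and the same telescoping forces $d_{\max} = 2m/n$, so the maximum degree equals the average, once more yielding regularity. The step I expect to be delicate is exactly this converse, because pointwise control of $\delta(i)$ or $\sigma(i)$ at intermediate $i$ looks difficult (it would require understanding how the extremal $i$-subsets evolve with $i$); the saving idea is that one need not control the intermediate terms at all, since summing the symmetry relation collapses it to a single scalar identity that on its own characterizes regular graphs.
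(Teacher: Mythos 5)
Your proof is correct. The reduction of $\tau$ and $\upsilon$ to $\sigma$ and $\delta$ via $\vc(S)+\e(V\setminus S)=m$ is exactly the paper's, and your forward direction rests on the same degree-counting identity as the paper's (the paper writes it as $2\Delta(i)+\Phi(i)=id$ and invokes $\Phi(i)=\Phi(n-i)$, while you write $\vc(S)=r|S|-\e(S)$ and invoke your reversal identity $\upsilon(i)=\delta(n-i+1)$; these are the same computation in different clothing). Where you genuinely diverge is the converse. The paper proves by induction on $i$ that $\Delta(n-i)=m-id+\Delta(i)$ and then sets $i=n$ to extract $m=nd/2$ with $d=d_{\min}$; your telescoping of $\sum_{i=1}^{n}\bigl(\delta(i)+\delta(n-i+1)\bigr)=2m=nc$ reaches the same scalar identity $d_{\min}=2m/n$ in one line, which is cleaner and makes explicit the point that no control of the intermediate $\delta(i)$ is needed. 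More substantively, the paper handles $\sigma$ by passing to the complement graph via $\Delta_G(i)=\binom{i}{2}-\Sigma_{\overline G}(i)$ and using that $G$ is regular iff $\overline G$ is; you instead run the same telescoping directly on $\sigma$, using $\sigma(1)=0$ and $\sigma(n)=d_{\max}$ to force $d_{\max}=2m/n$. Your route is shorter and avoids the complement graph entirely; the paper's complementation identity is the more reusable tool (it transfers any statement about $\delta$-sequences to one about $\sigma$-sequences wholesale), but for this particular theorem your direct argument is the more economical one.
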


We also observe that the $\theta$-sequence and the $\phi$-sequence are always symmetric.

\section{Symmetry of sequences and regular graphs}

\subsection{Max and Min $(k,n-k)$-Cut}


\begin{observation}
Every graph has a symmetric $\theta$-sequence (resp.~$\phi$-sequence). 
\end{observation}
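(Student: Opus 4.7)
The plan is to exploit the obvious involution $S \mapsto V \setminus S$ acting on subsets of $V$. The key point is that the cut function is self-complementary: for any $S \subseteq V$, an edge $uv$ has exactly one endpoint in $S$ iff it has exactly one endpoint in $V \setminus S$, so $\cu(S) = \cu(V \setminus S)$. Since $|V \setminus S| = n - |S|$, taking max (resp.\ min) over $\binom{V}{i}$ on both sides immediately gives $\Theta(i) = \Theta(n-i)$ and $\Phi(i) = \Phi(n-i)$ for every $i \in [0,n]$.

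From here the symmetry of the difference sequences follows by a one-line telescoping. For any $i \in [n]$,
\[
\theta(i) + \theta(n-i+1) = \bigl(\Theta(i) - \Theta(i-1)\bigr) + \bigl(\Theta(n-i+1) - \Theta(n-i)\bigr),
\]
and substituting $\Theta(n-i+1) = \Theta(i-1)$ and $\Theta(n-i) = \Theta(i)$ makes the right-hand side collapse to $0$. The same computation works verbatim for $\phi$, yielding $\phi(i) + \phi(n-i+1) = 0$. Hence $\theta(i)+\theta(n-i+1)$ and $\phi(i)+\phi(n-i+1)$ are both identically $0$, in particular independent of $i$, which is exactly the definition of a symmetric sequence.

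There is essentially no obstacle here; the observation is really just the remark that $\cu$ is invariant under complementation, combined with a telescoping identity. I would present it as two short displayed equations (the involution identity, and the collapse of $\theta(i)+\theta(n-i+1)$), and remark that the $\phi$ case is identical. It is worth noting that the proof gives something slightly stronger than the stated symmetry, namely $\theta(i) = -\theta(n-i+1)$ and $\phi(i) = -\phi(n-i+1)$, which is consistent with the expected ``up then down'' shape of the $\Theta$ and $\Phi$ curves.
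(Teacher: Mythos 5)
Your proof is correct and follows essentially the same route as the paper: establish $\Theta(i)=\Theta(n-i)$ (and likewise for $\Phi$) via the complementation $S \mapsto V\setminus S$, then telescope to get $\theta(i)+\theta(n-i+1)=0$. The only difference is that you spell out why $\cu(S)=\cu(V\setminus S)$, which the paper dismisses as immediate "by definition of a $(k,n-k)$-cut."
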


\begin{proof}
By definition of a $(k,n-k)$-cut, $\Theta(i) = \Theta(n-i)$, $\forall i \in [0,n]$. 
Thus, $\theta(i) = \Theta(i) - \Theta(i-1) = \Theta(n-i) - \Theta(n-i+1) = -\theta(n-i+1)$.
So, $\theta(i)+\theta(n-i+1)=0$,~$\forall~i~\in~[n]$.
%
%
The same argument carries over to the $\phi$-sequence.
\end{proof}

\subsection{$k$-Densest and $k$-Sparsest}

We prove the conjecture (given in~\cite{openiwoca}) for the $\delta$-sequence, namely:

\begin{proposition}\label{thm:delta}
$G$ is regular iff its $\delta$-sequence is symmetric.
\end{proposition}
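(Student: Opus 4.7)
The plan is to exploit a single identity available in the regular case and the extreme values $\delta(1), \delta(n)$ in the general case, so that each direction reduces essentially to a one-line computation.

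For the \emph{only if} direction, I would observe that if $G$ is $d$-regular, then for any $S \subseteq V$ with $|S| = i$ the degree-sum identity $\cu(S) = di - 2\e(S)$ makes $\e(V \setminus S)$ an affine function of $\e(S)$, namely $\e(V \setminus S) = m - di + \e(S)$. Since complementation is a bijection between ${V \choose i}$ and ${V \choose n-i}$, taking maxima on both sides gives $\Delta(n-i) = m - di + \Delta(i)$. Substituting this, together with the analogous formula at $i-1$, into the definition of $\delta(n-i+1)$ telescopes to $\delta(n-i+1) = d - \delta(i)$, proving symmetry with constant sum $d$.

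For the converse, I assume $\delta(i) + \delta(n-i+1)$ equals some constant $c$. Summing this identity over $i \in [n]$ doubles $\sum_i \delta(i) = \Delta(n) - \Delta(0) = m$, which pins down $c = 2m/n = \bar d$, the average degree of $G$. Then I would evaluate symmetry at $i = 1$: by definition $\delta(1) = \Delta(1) - \Delta(0) = 0$, while every $(n-1)$-subset is the complement of a single vertex, so $\Delta(n-1) = m - \min_{v} \deg(v)$ and hence $\delta(n) = \min_v \deg(v)$. The symmetry relation at $i = 1$ thus forces $\min_v \deg(v) = \bar d$, and since the minimum degree equals the average degree only when all degrees are equal, $G$ must be $\bar d$-regular.

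I do not anticipate a significant obstacle; the whole argument hinges on two small observations. The only step that requires any foresight is recognising ahead of time that the symmetry constant must be the average degree, because this is what makes the boundary instance $\delta(1) + \delta(n) = c$ collapse to ``$\min$ degree $=$ average degree'' and therefore yield regularity. Everything else is either a direct degree count or a telescoping sum.
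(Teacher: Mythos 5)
Your proof is correct, and both directions reach the same destination as the paper's, but the converse travels a noticeably cleaner road. For the forward direction you and the paper use the same core identity, $2\e(S)+\cu(S)=di$ in a $d$-regular graph; the paper packages it as $2\Delta(i)+\Phi(i)=id$ and invokes the symmetry $\Phi(i)=\Phi(n-i)$, whereas you apply complementation directly to get $\Delta(n-i)=m-di+\Delta(i)$ and telescope --- the same computation in different clothing. For the converse, the paper fixes $d$ to be the minimum degree (so the symmetry constant is $\delta(1)+\delta(n)=d$) and runs an induction establishing $\Delta(n-i)=m-id+\Delta(i)$, which at $i=n$ yields $2m=nd$ and hence regularity since the degree sum is $2m$. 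You instead sum the symmetry relation over all $i\in[n]$, which telescopes in one line to $nc=2m$, i.e.\ $c=\bar d$, and then compare with $c=\delta(1)+\delta(n)=\min_v\deg(v)$ to conclude that the minimum degree equals the average degree. This avoids the induction entirely and makes transparent \emph{why} symmetry forces regularity: the boundary terms of the sequence see the minimum degree while the global sum sees the average. Both arguments are valid; yours is shorter and, to my mind, more illuminating.
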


\begin{proof}
Let $n$ be the number of vertices of $G=(V,E)$, $m$ its number of edges, and $d$ its minimum degree.
Obviously, $\Delta(n) = m$ and $\Delta(n-1) = m-d$.
Indeed, an $(n-1)$-densest contains all the vertices of $V$ except one vertex of minimum degree.
Thus, $\delta(n)=m-(m-d)=d$.
Also, $\delta(1)=0$.
Therefore, the $\delta$-sequence is symmetric iff for each $i \in [n]$, $\delta(i)+\delta(n-i+1)=d$.

$\Rightarrow$ : We assume that $G$ is $d$-regular. 
We  observe that $2\Delta(i) + \Phi(i) = id$, $\forall i \in [n]$. 
So, $\delta(i) = \Delta(i) - \Delta(i-1) $ $= \frac{1}{2}(id - \Phi(i)) - \frac{1}{2}((i-1)d - \Phi(i-1)) = \frac{1}{2}(d - \Phi(i) + \Phi(i-1))$. 
Similarly, $\delta(n-i+1) = \frac{1}{2}(d - \Phi(n-i+1) + \Phi(n-i))$. 
As $\Phi(n-i+1)=\Phi(i-1)$ and $\Phi(n-i)=\Phi(i)$, it holds that $\delta(i)+\delta(n-i)=\frac{1}{2}(d-\Phi(i)+\Phi(i-1)+d-\Phi(i-1)+\Phi(i))=d$.

$\Leftarrow$ : We now assume that the $\delta$-sequence is symmetric, that is $\delta(i) + \delta(n-i+1) = d$, $\forall i \in [n]$. 
We show by induction on $i$ that $\Delta(n-i) = m - id + \Delta(i)$, $\forall i \in [0,n]$. 
This is true for $i=0$ since $\Delta(n) = m$ and $\Delta(0)=0$.
We suppose the property true for $j-1$ for some $j \in [n]$. 
$\Delta(n-j) = \Delta(n-(j-1)) - \delta(n-j+1)$ by definition of $\delta$. 
By induction hypothesis, one gets $\Delta(n-j) = m - (j-1)d + \Delta(j-1) - \delta(n-j+1)$. 
Since the $\delta$-sequence is symmetric, $\Delta(n-j) = m - (j-1)d + \Delta(j-1) + \delta(j) - d$. 
Therefore, $\Delta(n-j) = m - jd + \Delta(j)$, ending the induction.

The property for $i=n$ yields $\Delta(0)=m-nd+\Delta(n)$.
Hence, $m=\frac{nd}{2}$.
As $m=\frac{1}{2}\Sigma_{v \in V}\text{deg}(v)$, the previous identity only holds if $\forall v \in V$, $\text{deg}(v)=d$, that is graph $G$ is $d$-regular.

\end{proof}

\begin{proposition}
$G$ is regular iff its $\sigma$-sequence is symmetric.
\end{proposition}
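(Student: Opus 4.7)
The plan is to mirror the proof of Proposition~\ref{thm:delta} step by step, with the substitutions $\Delta\to\Sigma$, $\Phi\to\Theta$, and minimum degree $d$ replaced by maximum degree $D$. Let $D$ denote the maximum degree of $G$. First I would establish the boundary values: $\Sigma(0)=\Sigma(1)=0$, $\Sigma(n)=m$, and $\Sigma(n-1)=m-D$ since a sparsest $(n-1)$-subset is obtained by removing a vertex of maximum degree. Consequently $\sigma(1)=0$ and $\sigma(n)=D$, so symmetry of the $\sigma$-sequence is equivalent to $\sigma(i)+\sigma(n-i+1)=D$ for every $i\in[n]$.

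For the forward direction, assuming $G$ is $d$-regular, I would use the identity $2\e(S)+\cu(S)=id$ for every $S\in\binom{V}{i}$ (summing degrees over $S$). Since minimizing $\e(S)$ is now the same as maximizing $\cu(S)$, this gives $2\Sigma(i)+\Theta(i)=id$. Differencing yields $\sigma(i)=\tfrac{1}{2}(d-\Theta(i)+\Theta(i-1))$, and similarly for $\sigma(n-i+1)$. Invoking $\Theta(i)=\Theta(n-i)$ from the observation on $(k,n-k)$-cuts, the terms $\Theta(\cdot)$ cancel and we obtain $\sigma(i)+\sigma(n-i+1)=d=D$.

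For the backward direction, assuming the symmetry $\sigma(i)+\sigma(n-i+1)=D$, I would prove by induction on $i$ that $\Sigma(n-i)=m-iD+\Sigma(i)$ for all $i\in[0,n]$. The base case $i=0$ uses $\Sigma(n)=m$ and $\Sigma(0)=0$. The inductive step writes $\Sigma(n-j)=\Sigma(n-j+1)-\sigma(n-j+1)$, applies the induction hypothesis at $j-1$, and then uses $\sigma(n-j+1)=D-\sigma(j)$ to collapse to $\Sigma(n-j)=m-jD+\Sigma(j)$. Setting $i=n$ gives $\Sigma(0)=m-nD+\Sigma(n)$, hence $m=\tfrac{nD}{2}$; combined with $m=\tfrac12\sum_{v\in V}\deg(v)$ and the fact that $D$ is the maximum degree, this forces every vertex to have degree exactly $D$, so $G$ is $D$-regular.

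There is no real obstacle here beyond being careful about directions: one must replace ``vertex of minimum degree'' by ``vertex of maximum degree'' in the boundary analysis, and replace the min-cut invariant $\Phi$ used in Proposition~\ref{thm:delta} with the max-cut invariant $\Theta$, so that the cut function and the edge function are still extremized simultaneously by the same set in the regular case. Once these dualities are set up, the algebra is the exact mirror of the $\delta$-sequence proof.
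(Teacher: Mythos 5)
Your proof is correct, but it takes a genuinely different route from the paper. The paper does not redo the induction at all: it reduces the $\sigma$-statement to the already-proved $\delta$-statement via the complement graph, using the identity $\Delta_G(i) = \binom{i}{2} - \Sigma_{\overline G}(i)$ to show that $\delta_G(i) + \delta_G(n-i+1) + \sigma_{\overline G}(i) + \sigma_{\overline G}(n-i+1) = n-1$ for all $i$, so that one sum is constant in $i$ exactly when the other is, and then concludes with Proposition~\ref{thm:delta} together with the fact that $G$ is regular iff $\overline G$ is. Your argument instead replays the entire $\delta$-proof directly for $\Sigma$, pairing the minimization of $\e(S)$ with the maximization of $\cu(S)$ via $2\Sigma(i)+\Theta(i)=id$ in the regular case, and running the same induction with the maximum degree $D$ in place of the minimum degree $d$ for the converse. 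Each choice of dual invariant ($\Theta$ for $\Sigma$, $\Phi$ for $\Delta$) and each boundary value ($\Sigma(n-1)=m-D$ versus $\Delta(n-1)=m-d$) is handled correctly, and the final step $m=\tfrac{nD}{2}$ with $D$ the maximum degree does force $D$-regularity. What the paper's route buys is brevity and the reuse of one inductive argument; what your route buys is a self-contained proof that does not pass through the complement graph and that makes explicit the sparsest-subgraph/max-cut duality, which is arguably more informative about why the constant in the symmetric $\sigma$-sequence is the maximum degree.
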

\begin{proof}
We prove that the $\delta$-sequence is symmetric iff the $\sigma$-sequence is symmetric and conclude with Proposition~\ref{thm:delta}.
For that we show that $\delta_G(i) + \delta_G(n-i+1) + \sigma_{\overline G}(i) + \sigma_{\overline G}(n-i+1) = n-1$, $\forall i \in [n]$, where $\overline G$ is the complement graph of $G$. 
Thus, $\delta_G(i) + \delta_G(n-i+1)$ does not depend on $i$ iff $\sigma_{\overline G}(i) + \sigma_{\overline G}(n-i+1)$ does not depend on $i$.

Observe that $\Delta_G(i) = {i \choose 2} - \Sigma_{\overline G}(i)$.
Thus, $\delta_G(i) + \delta_G(n-i+1) = \Delta_G(i) - \Delta_G(i-1) + \Delta_G(n-i+1) - \Delta_G(n-i) =  {i \choose 2} - \Sigma_{\overline G}(i) -  ({i-1 \choose 2} - \Sigma_{\overline G}(i-1)) + {n-i+1 \choose 2} - \Sigma_{\overline G}(n-i+1) - ({n-i \choose 2} - \Sigma_{\overline G}(n-i)) = - (\Sigma_{\overline G}(i) - \Sigma_{\overline G}(i-1)) + i - 1 - (\Sigma_{\overline G}(n-i+1) - \Sigma_{\overline G}(n-i)) + n - i = - \sigma_{\overline G}(i) - \sigma_{\overline G}(n-i+1)  + n - 1$. 
This ends the proof since the $\delta$-sequence is symmetric iff $G$ is regular iff $\overline G$ is regular.
\end{proof}

\subsection{Max and Min $k$-Vertex Cover}

\begin{proposition}
$G$ is regular iff its $\tau$-sequence is symmetric.
\end{proposition}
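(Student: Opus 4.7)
The plan is to reduce this claim to the $\sigma$-case (Proposition~2) via a simple complementation identity, so that almost no work is needed beyond a one-line observation about $\Tau$ and $\Sigma$.

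First I would observe that for any $S \subseteq V$, an edge fails to be covered by $S$ iff both its endpoints lie in $V \setminus S$. Hence $\vc(S) = m - \e(V \setminus S)$. Taking maxima over $|S|=i$ and minima over the complementary set of size $n-i$, this yields
\[
\Tau(i) \;=\; m - \Sigma(n-i), \qquad \forall i \in [0,n].
\]

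Next I would plug this identity into the definition of $\tau$:
\[
\tau(i) \;=\; \Tau(i) - \Tau(i-1) \;=\; \Sigma(n-i+1) - \Sigma(n-i) \;=\; \sigma(n-i+1).
\]
Thus $\tau(i) + \tau(n-i+1) = \sigma(n-i+1) + \sigma(i)$ for every $i \in [n]$. The left-hand side is independent of $i$ iff the right-hand side is, so the $\tau$-sequence is symmetric iff the $\sigma$-sequence is symmetric.

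Finally, invoking Proposition~2 (symmetry of the $\sigma$-sequence is equivalent to regularity of $G$) closes both directions at once. There is no real obstacle here: the whole proof rests on the duality between covering edges with $S$ and leaving them uncovered in $V \setminus S$, which converts the Max $i$-Vertex Cover problem into the $(n-i)$-Sparsest Subgraph problem and makes the $\tau$-sequence a mirror image of the $\sigma$-sequence.
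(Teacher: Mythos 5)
Your proof is correct and follows essentially the same route as the paper: both rest on the identity $\Tau(i) = m - \Sigma(n-i)$ (the complement of an $i$-set covering the most edges is an $(n-i)$-set inducing the fewest edges) and deduce $\tau(i) + \tau(n-i+1) = \sigma(i) + \sigma(n-i+1)$, reducing the claim to the $\sigma$-case. Your intermediate observation that $\tau(i) = \sigma(n-i+1)$ is a slightly crisper way to package the same computation.
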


\begin{proof}
Equivalently, we show that the $\tau$-sequence is symmetric iff the $\sigma$-sequence is symmetric, and more precisely that $\tau(i) + \tau(n-i+1) =\sigma(i) + \sigma(n-i+1)$.
For any $i \in [n]$, the complement of a $(n-i)$-sparsest is a maximum $i$-vertex cover (that is, a set of $i$ vertices touching the largest number of edges) and $\Tau(i) + \Sigma(n-i) = m$.
So, $\tau(i) + \tau(n-i+1) = \Tau(i) - \Tau(i-1) + \Tau(n-i+1) - \Tau(n-i)$
$=- \Sigma(n-i) + \Sigma(n-i+1) - \Sigma(i-1) + \Sigma(i) =\sigma(i) + \sigma(n-i+1)$.
\end{proof}

\begin{proposition}
$G$ is regular iff its $\upsilon$-sequence is symmetric.
\end{proposition}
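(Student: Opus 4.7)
The plan is to mimic the previous proposition: show that the $\upsilon$-sequence is symmetric if and only if the $\delta$-sequence is symmetric, and then invoke Proposition~\ref{thm:delta}.

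First I would establish a duality between minimum $i$-vertex covers and maximum $(n-i)$-densest sets. Concretely, for any $i \in [0,n]$ and any $S \in {V \choose i}$, the edges of $G$ split into those covered by $S$ and those entirely inside $V \setminus S$, so $\vc(S) + \e(V \setminus S) = m$. Taking the minimum on the left forces the maximum on the right, yielding the identity $\Upsilon(i) + \Delta(n-i) = m$ for all $i \in [0,n]$. This is the exact analogue of the identity $\Tau(i) + \Sigma(n-i) = m$ used in the $\tau$-sequence proof, except it pairs min vertex cover with max density instead of max vertex cover with min density.

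Next I would simply compute, using this identity, that
\[
\upsilon(i) + \upsilon(n-i+1) = \bigl(\Upsilon(i)-\Upsilon(i-1)\bigr) + \bigl(\Upsilon(n-i+1)-\Upsilon(n-i)\bigr)
\]
equals
\[
\bigl(\Delta(n-i+1)-\Delta(n-i)\bigr) + \bigl(\Delta(i)-\Delta(i-1)\bigr) = \delta(n-i+1) + \delta(i).
\]
Therefore $\upsilon(i)+\upsilon(n-i+1)$ is independent of $i$ exactly when $\delta(i)+\delta(n-i+1)$ is, i.e.\ the $\upsilon$-sequence is symmetric iff the $\delta$-sequence is.

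Finally, applying Proposition~\ref{thm:delta} closes the chain of equivalences: the $\delta$-sequence is symmetric iff $G$ is regular. I do not anticipate any real obstacle here; the only step that requires any care is justifying the min/max duality $\Upsilon(i) + \Delta(n-i) = m$, but this is immediate once one observes that the edges not covered by a set $S$ are precisely the edges induced by its complement.
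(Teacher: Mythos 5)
Your proposal is correct and follows exactly the paper's route: the paper likewise uses the identity $\Upsilon(i) + \Delta(n-i) = m$ (complement of an $(n-i)$-densest is a minimum $i$-vertex cover) and then reduces symmetry of the $\upsilon$-sequence to that of the $\delta$-sequence, concluding by Proposition~\ref{thm:delta}. You have merely written out the telescoping computation that the paper leaves as ``similarly to the previous proof.''
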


\begin{proof}
For any $i \in [n]$, the complement of a $(n-i)$-densest is a minimum $i$-vertex cover (that is, a set of $i$ vertices touching the smallest number of edges) and $\Upsilon(i) + \Delta(n-i) = m$.
Then, similarly to the previous proof, we can show that the $\upsilon$-sequence is symmetric iff the $\delta$-sequence is symmetric.
\end{proof}

\bibliographystyle{abbrv} 
\bibliography{eip}

\end{document}